\documentclass[10pt, twocolumn]{article}
\usepackage{graphicx}
\usepackage{amsmath}
\usepackage{amsfonts}
\usepackage{amssymb}
\usepackage{latexsym}
\usepackage{amsthm}

\usepackage{times}

\newcommand{\YA}{Yahoo Answers }
\newcommand{\YAc}{Yahoo Answers}
\newcommand{\comment}[1]{}

\setlength{\textwidth}{16.51cm} \setlength{\textheight}{22.51cm}
\setlength{\topmargin}{-0.5  truein}
\setlength{\columnsep}{1cm}

\newtheorem{hypothesis}{Hypothesis}
\newtheorem{lemma}{Lemma}

\usepackage{setspace}
\renewcommand{\P}{\ensuremath{\mathbb{P}}}
\newcommand{\E}{\ensuremath{\mathbb{E}}}

\begin{document}

\title{ Human Speed-Accuracy Tradeoffs in Search}
\author{Christina Aperjis  \\ HP Labs \\ christina.aperjis@hp.com \and Bernardo A. Huberman \\ HP Labs \\ bernardo.huberman@hp.com \and Fang Wu \\ HP Labs \\ suncube@gmail.com}
\date{}

%
%
%

\maketitle

\begin{abstract}

\emph{When foraging for information, users face a tradeoff between the accuracy and value of the acquired information and the time spent collecting it, a problem which also surfaces when seeking answers to a question posed to a large community. We empirically study how people behave when facing these conflicting objectives using data from \YAc, a community driven question-and-answer site.
We first study how users behave when trying to maximize the amount of acquired information while minimizing the waiting time. We find that users are willing to wait longer for an additional answer if they have received a small number of answers.
We then assume that users make a sequence of decisions, deciding to wait for an additional answer as long as the quality of the current answer exceeds some threshold. The resulting probability distribution for the number of answers that a question gets is an inverse Gaussian, a fact that is validated by our data.
}

\end{abstract}

\section{Introduction}

When searching for an answer to a question, people face a well known tradeoff between the accuracy of the acquired information and the time spent collecting it.  The fact that it usually takes longer to find a better answer to a given question creates a dilemma which is inherent in information seeking.  Stopping the search for information early provides swift answers which might not be completely correct, whereas continuing to search for longer usually provides accuracy and completeness while sacrificing the timeliness of the answer.  Worse, if answers to a question arrive at random intervals, a user seeking specific information that will inform a decision to be made faces the problem of having to wait for an uncertain interval of time in exchange for an answer which may or may not improve on what is already known.

Rather than study these issues in a laboratory setting, we decided to do so in the natural context of the Web. Specifically, we studied user behavior within \YAc, a community-driven question-and-answer site with more than 21 million unique users.
At \YAc, users post questions seeking to harvest the collective intelligence of others in the system.  Once a user submits a question, it gets posted on the site.  Other users can then submit answers to the question, which are also posted on the site.  When the author of a question is satisfied with the answers he has received, he closes the question and thus terminates his search for answers.  He then uses information from the answers he received to build his own ``aggregate answer'' to his question, that will potentially help him take a related decision.

There are two aspects that users value with respect to the aggregate answers they obtain: {\em accuracy} and {\em speed}, and thus they try to maximize the accuracy of their aggregate answers without waiting too long. The accuracy of the aggregate answer depends on the accuracy of all individual answers that the question received.

Anyone posting a question faces the following tradeoff at any given point in time.  He can either build his aggregate answer at a given time or wait for additional answers to arrive.  If he waits, he may achieve a higher accuracy in the future, but also incurs a cost for waiting.  The user's intent is to build his aggregate answer at the optimal stopping time.  We take two complementary approaches to the analysis of user behavior with respect to speed-accuracy tradeoffs.

Our first approach studies the speed-accuracy tradeoff by using the number of answers as a proxy for accuracy.  In particular, we assume that the user estimates the accuracy of his aggregate answer by the number of answers that his question gets.  Thus, he faces the following tradeoff: he prefers more to less answers, but does not want to wait too long.  We analyze \YA data to identify and quantify this tradeoff.  Our first finding is that users are willing to wait more to obtain one additional answer when they have only received a small number of answers; this implies  decreasing marginal returns in the number of answers.  Formally, this implies a concave utility function in the amount of information.  We then estimate the utility function from the data.

Our second approach considers the qualities of the individual answers without explicitly computing the cost of waiting.  We assume that users 
decide to wait as long as the value of the current answer exceeds some threshold.  Under this model, the probability distribution for the number of answers that a question gets is an inverse Gaussian, which is a Zipf-like distribution.  We use the data to validate this conclusion.

The rest of the paper is organized as follows.  Section \ref{sec:related} reviews related work.  Section \ref{sec:rules} describes \YA focussing on the rules that are important for our analysis.  Section \ref{sec:tradeoff} empirically studies the speed-accuracy tradeoff by using the number of answers as a proxy for accuracy.  Section \ref{sec:quality} we focuses on how users assess quality.   Section \ref{sec:conclusion} concludes.

\section{Related Work}
\label{sec:related}

This paper uses \YA data to study information seeking behavior with respect to stopping when people face speed-accuracy tradeoffs.  
In this section we review related work.

A number of papers have studied information seeking behavior on the Web.  For instance, Efthimiadis considers the search for health information~\cite{HICSS:health}, and Egusa et al. study how people search when asked to complete certain tasks~\cite{HICSS:depth}.  Russell et al. present a search-task taxonomy~\cite{HICSS:intent}.
However, these papers do not explicitly consider behavior with respect to stopping and speed-accuracy tradeoffs.

The behavior of people with respect to stopping problems has been studied extensively in the context of the secretary problem~\cite{classical}.
In the classical secretary problem, applicants are interviewed sequentially in a random order and the goal is to maximize the probability of choosing the best applicant.  The applicants can be ranked from best to worst with no ties.  After each interview, the applicant is either accepted or rejected.
If the decision maker knows the total number of applicants $n$, for large $n$ the optimal policy is to interview and reject the first $n/e$ applicants (where $e$ is the base of the natural logarithm) and then to accept the next who is better than these interviewed candidates~\cite{classical}.

Experimental studies of the classical secretary problem and variants show that people tend to stop too early and give insufficient consideration to the yet-to-be-seen applicants (e.g., ~\cite{bearden}).  On the other hand, when there are search costs and recall (backward solicitation) of previously inspected alternatives is allowed, people tend to search longer than the optimum~\cite{recall}.
We note a key difference with the setting of information seeking: while in the secretary problem only one secretary can be hired, an information seeker can combine information from multiple sources to build a more accurate answer for his question.  Moreover, in the secretary problem the decision maker does not face a speed-accuracy tradeoff, because time does not affect his payoff.

The speed-accuracy tradeoff has been considered in various settings.  One example is a setting where a group cooperates to solve a problem~\cite{speed_accuracy}.
In psychology, on the other hand, the speed-accuracy tradeoff is used to describe the tradeoff between how fast a task can be performed and how many mistakes are made in performing the task (e.g.,~\cite{impulsivity, psycho}).

There has been a number of empirical studies that use data from \YA and other question-answering communities.
Data from \YA have been used to predict whether a particular answer
will be chosen as the best answer~\cite{adamic}, and whether a user will be satisfied with the answers to his question~\cite{datasetA}.
Content analysis has been used to study the criteria with which users select the best answers to their questions~\cite{selection_criteria}.
Shah et al. study the effect of user participation on the success of a social Q\&A site~\cite{QAparticipation}.
Aji and Agichtein analyze the factors that influence how the \YA community responds to a question~\cite{datasetB}.
Finally, various characteristics of user behavior in terms of asking and answering questions have been considered in~\cite{koutrika}.
To the best of our knowledge, there have been no studies that consider user behavior in terms of the speed-accuracy tradeoff in question-answering communities.

\section{Yahoo Answers}
\label{sec:rules}

In this section we describe the rules of \YA that are important for our analysis, and explain why users often face speed-accuracy tradeoffs.  We then briefly discuss the data we use.

\subsection{Rules}

\YA is a question-and-answer site that allows users to both submit questions to be answered and answer questions asked by other users.
When a user submits a question, the question is posted on the \YA site.  Other users can then see the question and submit answers, which are also posted on the site.  According to standard \YA terminology, the user that asks the question is called the {\em asker}, and a user answering is called an {\em answerer}.  In this paper, we study the behavior of the asker, and thus the word {\em user} is used to describe the asker.

Once the user starts receiving answers to his question, he can choose the best answer at any point in time.  After the best answer to a question is selected, the question does not receive any additional answers.  We thus say that a user closes the question when he chooses the best answer.  Closing the question is equivalent to terminating the search for answers to the question.

Questions have a 4-day open period.  If a question does not receive any answers within the 4-day open period, it expires and is deleted. However, before the question expires, the asker has the option to extend the time period a question is open by four more days.  The time can only be extended once.  However, most questions are not extended, and in our analysis we only consider questions that were closed within the 4-day period.
If the asker does not choose a best answer to his question within the 4-day open period, then the question is up for voting, that is, other \YA users can vote to determine the best answer to the question.

\subsection{Askers Face Speed-Accuracy \\ Tradeoffs}

We expect that the user is satisfied with the answers he received when he closes the question.  The user then uses information from these answers to build his own aggregate answer to his question.
Throughout the paper, we use the term {\em aggregate answer} to refer to the conclusion that the question author draws by reading the answers to his question.  The aggregate answer is {\em not} posted on the \YA site, and is often not recorded.

Askers at Yahoo Answers are often asking questions to get information that will help them make a decision.  In particular, the decision will be aided by the aggregate answer.
In many cases, the asker prefers to decide sooner than later. For instance, this is usually the case when the asker is seeking information on which product to buy, because the sooner he decides what he is buying, the sooner he will get the product, and the sooner he will derive value by using it.

As an example, we can consider the following question from Yahoo Answers: ``What is the best graphics card for gamers?'' In this case, the asker specified his price range and what he wanted to use the card for. The answerers then wrote their opinions on which card would be the best for the asker. Then, the asker chose the best answer and stated his decision on which card he was going to buy. It is reasonable to expect that in this case the asker would derive strictly higher utility by having the graphics card sooner.

This motivates the speed-accuracy tradeoff. In particular, if the question has $n$ answers, the asker can either make a decision now using these $n$ answers, or make a decision later using more information. The cost of waiting is the cost of delaying the decision.

We do not suggest that an asker closes his question because of the speed-accuracy tradeoff. We just expect that if an asker did close his question, then the closing time provides a good approximation to the time that the asker made his decision.  We use the closing time in our analysis to understand how users behave with respect to speed-accuracy tradeoffs.

We next briefly discuss why an asker may close his question. There are two motivations. First, this is a way of thanking the answerer for his time and effort. Second, an asker gets some points in the Yahoo Answers reputation system if he closes his question. Clearly, not everyone is sufficiently motivated to close his answers in this way, and as a result many askers do not close their questions. However, given that an asker did close his question, we expect that the closing time gives information about how the asker behaved with respect to the speed-accuracy tradeoff.

\subsection{Data}

We use a \YA dataset that was crawled in October 2008 by Aji and Agichtein~\cite{datasetB}.
For each question in this dataset we know the time the question was posted, the arrival time of each answer to the question, and the time that the asker closed the question by selecting the best answer.

For the purposes of this paper, we only consider questions for which the best answer was selected by the asker.  The reason is that we are interested in the time that the asker terminates his search for information by closing the question.  If the asker selects the best answer, this is the time that the best answer was selected.  On the other hand, if the asker does not select a best answer, we have no relevant information (we do not know when and whether the asker built his aggregate answer).

Furthermore, we restrict attention to questions that were open for less than 100 hours.  This is motivated by the fact that questions are initially open for 4 days (96 hours) and that most askers close their questions within this 4-day period.

We thus use a subset of the originally collected data that consists of questions that were closed by the asker in less than 100 hours.
This subset consists of 1,536 questions.

One could argue that there is no reason for a user to close his question before the 4-day open period is over.  In particular, he could use the information from the answers he has received up to now, but close the question at the end of the fourth day.  However, once a user takes a decision (e.g., buys a product), there is little or no value in getting additional information.  Thus, if the user is planning to select a best answer to his question (in order to thank the answerer), he has no reason to wait until the 4-day period is over.
Indeed, Figure \ref{fig:stop} shows that most questions close a significant amount of time before the 4-day deadline.  For instance, 29\% of the questions close within just one day after the question was posted.

\begin{figure}
\begin{center}
		\includegraphics[width=0.5\textwidth]{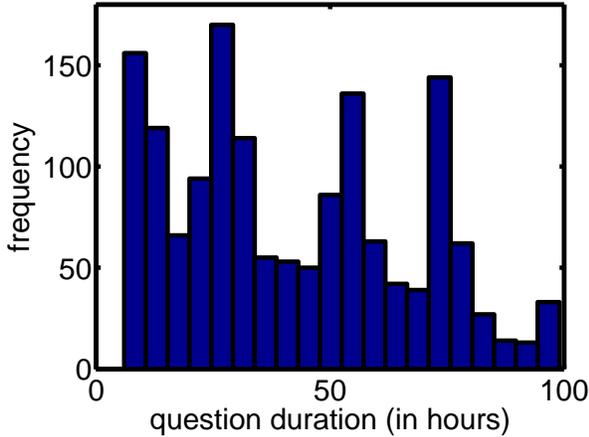}
\end{center}
\vspace{-1ex}
\caption{Histogram of the number of hours that questions were open (before the asker chose the best answer).}
\vspace{-2ex}
\label{fig:stop}
\end{figure}

\section{Speed-Quantity Tradeoff}
\label{sec:tradeoff}

A user wants to get an accurate aggregate answer (that will potentially help him take a decision) without waiting too long.  The accuracy of the user's aggregate answer is subjective and hard to measure.  In this section we use the number of answers as an approximation of accuracy.  Thus, we expect that a user's utility increases in the total number of answers that his question receives, and decreases in the time he waits for answers to arrive.

Section \ref{sec:utility} develops our hypotheses drawing on a utility model.  In Section \ref{sec:elapsed} we test our first hypothesis.  In Section \ref{sec:logit} we introduce a discrete choice model, which we estimate in Section \ref{sec:discrete} to test the remaining hypothesis.  Finally, in Section \ref{sec:ut_est} we discuss the form of the utility function.

\subsection{Utility Model}
\label{sec:utility}

Let $n$ be the total number of answers at the time the user builds his aggregate answer.  We assume that the user gets utility $u(n)$. Furthermore, we assume that the user incurs a cost $c(t)$ for waiting for time $t$.  Thus, the user is seeking to maximize $u(n) - c(t)$.

\subsubsection{Myopic Decision Rule}

Suppose that $n$ answers have arrived.  The user can either terminate his search by choosing the best answer, or wait for additional answers.  If he terminates his search now, he can build his aggregate answer using the $n$ answers that he has received, and thus get utility $u(n)$.  If he chooses to wait and a new answer arrives $t$ time units later, then he will have $n+1$ answers, but will have also incurred a cost $c(t)$ for waiting.  His utility will then be $u(n+1) - c(t)$. The user is better off stopping if $u(n+1) - u(n) < c(t)$, and continuing if $u(n+1) - u(n) > c(t)$.
In words, {\em the user decides to close the question if the cost of waiting for one more answer exceeds the incremental benefit of having one more answer}.

Our previous description assumes that the user knows when the next answer will arrive, which is not the case in reality.  
More generally, let $T$ be a random variable that describes the user's belief on how long it will take until the next answer arrives.  Then, the user is better off closing the question if $u(n+1) - u(n) < \E [c(T)]$, and continuing if $u(n+1) - u(n) > \E [c(T)]$.

The strategy we just described is myopic, since it assumes that a user decides whether to wait (i.e., not to close the question) by only considering whether he is better off waiting for one more answer.  
Alternatively, if the user knew when each answer is going to arrive in the future, we could consider a global optimization problem: if the $i$-th answer is expected to arrive at time $t_i$, the user would choose to close the question at the time $t_j$ that maximizes $u(j) - c(t_j)$.
However, in the context of \YA it is impossible for users to know when all future answers will arrive.  It is thus more realistic to assume that users myopically optimize as randomness is realized.

\begin{figure}
\begin{center}
\begin{tabular}{|l|}
  \hline
If $u(n+1) - u(n) < \E [c(T)]$, then close the question\\
\hline
If $u(n+1) - u(n) > \E [c(T)]$, then wait \\
  \hline
\end{tabular}
\end{center}
\caption{Myopic decision rule.}
\label{fig:myopic}
\end{figure}

We summarize the myopic decision rule in Figure \ref{fig:myopic}.
It implies that a user is more likely to close the question when $u(n+1) - u(n)$ is small and/or $\E [c(T)]$ is large.

\subsubsection{Hypotheses}

We next develop our hypotheses building on the myopic decision rule.
Our hypotheses can be grouped in two categories.  The first category is based on the assumption that the marginal benefit of having one more answer decreases as more answers arrive; the second considers how users estimate when the next answer will arrive.

The user's valuation for having $n$ answers is $u(n)$.  We expect that $u(n)$ is concave, i.e., the marginal benefit of having one more answer decreases as the number of answers increases.  According to the myopic decision rule (Figure \ref{fig:myopic}), the user is more likely to close his question when $u(n+1) - u(n)$ is small.  Since we expect that $u(n+1) - u(n)$ is decreasing in $n$, the user is more likely to close his question when $n$ is large, i.e., when he has already received a large number of answers.   We test this in two ways, outlined in Hypotheses \ref{h:elapsed} and \ref{h:number}.

\begin{hypothesis}
\label{h:elapsed}
The amount of time that a user waits since the arrival of the last answer before closing his question is decreasing in the number of answers that the question has received.
\end{hypothesis}

\begin{hypothesis}
\label{h:number}
A user is more likely to close his question if the question has received many answers.
\end{hypothesis}

The user believes that the time until the next answer arrives is described by some random variable (which can be degenerate if he is only using an estimate).
It is reasonable to assume that a user forms his belief using the information available to him, that is, the arrival times of previous answers and the time he has waited since the last answer arrived.

A particularly important summary statistic is the last inter-arrival time, i.e., the time between the arrivals of the two most recent answers.  The last inter-arrival time is an estimate of the inverse current arrival rate of answers.  Thus, the user may use the last inter-arrival time as an estimate of the next inter-arrival time, i.e., the time between the arrival of the last answer and the next answer.  More generally, the user may form a belief on the next inter-arrival time that depends on the last inter-arrival time in some increasing fashion.
Then, if the last inter-arrival time is large, the user expects to wait a long time until he receives another answer, thus incurring a large waiting cost.  This encourages the user to close the question now.  This is the context of Hypothesis \ref{h:interarrival}.

\begin{hypothesis}
\label{h:interarrival}
A user is more likely to close his question if the last inter-arrival time is large.
\end{hypothesis}

This hypothesis is based on the assumption that the last inter-arrival time may be used as an estimate for the next inter-arrival time.
However, if a long time has elapsed since the last answer arrived (e.g., a longer period than the last inter-arrival time), the user becomes less certain about this estimate.  The increased uncertainty may lead him to expect a greater waiting cost until the next answer arrives.  In turn, this encourages the user to close the question, as is outlined in Hypothesis \ref{h:hours}.

\begin{hypothesis}
\label{h:hours}
A user is more likely to close his question if a long time has elapsed since the most recent answer arrived.
\end{hypothesis}

Hypothesis \ref{h:elapsed} is tested in Section \ref{sec:elapsed}.  Then, in Section \ref{sec:logit} we introduce a discrete choice model, which we estimate in Section \ref{sec:discrete} to test Hypotheses \ref{h:number}, \ref{h:interarrival}, and \ref{h:hours}.

\subsection{Time Between Last Arrival and \\ Closure}
\label{sec:elapsed}

In this section, we test whether a user waits longer before closing his question when the question has received a small number of answers (Hypothesis \ref{h:elapsed}).

For every question we consider the following variables:
\begin{itemize}
\item TotalAnswers: the total number of answers that the question received.  This is the number of answers at the time that the asker closed the question.
\item ElapsedTime: the time between the arrival of the last answer and the time the user closed the question.
\end{itemize}
We test for correlation between TotalAnswers and ElapsedTime.
The results are presented in Table \ref{tab:cor}.
We find that TotalAnswers and ElapsedTime are negatively correlated, bringing support for Hypothesis \ref{h:elapsed}.

\begin{table}
\begin{center}
\begin{tabular}{|c|c|c|}
   \hline
  Corelation coefficient   &  -0.148***\\
  95\% confidence interval  &  [-0.196, -0.098]\\
  \hline
  Observations &  1,536\\
  \hline
\end{tabular}
\end{center}
\caption{Correlation between the number of answers (TotalAnswers) and the time that the user waits before closing the question (ElapsedTime).  *, ** and *** denote significance at 1\%, 0.5\% and 0.1\% respectively.}
\label{tab:cor}
\end{table}

Table \ref{tab:cor} suggests that the user is willing to wait more (and incur more cost from waiting) for an additional answer if only a few answers have arrived up to now.  This implies that the marginal benefit of having one additional answer decreases as the number of answers increases.

\subsection{Model Specification}
\label{sec:logit}

In this section, we introduce a logit model, which is estimated in Section \ref{sec:discrete} to test Hypotheses \ref{h:number}, \ref{h:interarrival}, and \ref{h:hours}.

A user posts a question.  Then, at various points in time he revisits \YA to see the answers that his question has received, and decides whether to close the question by selecting the best answer.  We are interested in the probability that the user closes the question during a given visit.
For every visit we consider the following variables:
\begin{itemize}
\item $p$: the probability that the user closes the question during the visit.
\item $n$: the number of answers that the question has received by the time of the visit.
\item $l$: the last inter-arrival time, i.e., the time between the arrivals of the two most recent answers (at the time of the visit).
\item $w$: the time since the last answer arrived, i.e., the time that the user has been waiting for an answer since the last arrival.  This is equal to the difference between the time of the visit and the arrival time of the most recent answer.
\end{itemize}

Recall the utility model introduced in Section \ref{sec:utility}.  If the user closes the question at $n$ answers, his utility is $u(n)$.  Suppose that the user believes that the next answer will arrive in time $T$, where $T$ is some random variable.  Moreover, we assume that the user uses the last inter-arrival time $l$ and the time since the last answer $w$ to form his belief; that is $T$ depends on $l$ and $w$, and we write $T(l,w)$.
Then, the user expects to obtain utility $u(n+1) - \E [c(T(l,w))]$ from waiting.  According to the myopic decision rule (Figure \ref{fig:myopic}), the user decides whether to close the question or not depending on which of the expressions $u(n)$, $u(n+1) - \E [c(T(l,w))]$ is larger.

We now perturb $u(n)$ and $u(n+1) - \E [c(T(l,w))]$ with some noise.  In particular, we assume that the user's utility is
\[u(n) + \epsilon_0\]
if he closes the question, and
\[u(n+1) - \E [c(T(l,w))] + \epsilon_1\]
if he waits for the next answer.    Thus, the probability of closing the question is
\begin{align*}
p &= \P [u(n) + \epsilon_0 > u(n+1) - \E [c(T(l,w))] + \epsilon_1] \notag \\
         &= \P [\epsilon_1 - \epsilon_0 < \E [c(T(l,w))] - (u(n+1) - u(n))] \notag\\
         &= F(\E [c(T(l,w))] - (u(n+1) - u(n))),
         \label{eq:logut}
\end{align*}
where $F$ is the cumulative distribution function of $\epsilon_1 - \epsilon_0$.

Suppose that $\epsilon_0$ and $\epsilon_1$ are independent type 1 extreme value distributed.  Then, the difference $\epsilon_1 - \epsilon_0$ is logistically distributed, and
\[p = \Lambda(\E [c(T(l,w))] - (u(n+1) - u(n))),\]
where
\[\Lambda(z) = \frac{1}{1+e^{-z}}\]
is the logistic function.

The previous argument gives rise to the logit model, a standard discrete choice model in microecomics (see e.g., \cite{econometrics}).\footnote{If we assume a different distribution for $\epsilon_0$ and $\epsilon_1$, we get a different discrete choice model.  The logit model is widely used because of its simplicity.  Another widely used model is the probit model, which assumes that $\epsilon_0$ and $\epsilon_1$ are normally distributed.  We note that the probit model gives the same qualitative results as the logit model for out dataset.}

In Section \ref{sec:discrete} we estimate the following model:

\begin{equation}
\label{eq:logit}
p = \Lambda(\alpha + \beta_1 \cdot n + \beta_2 \cdot l + \beta_3 \cdot w),
\end{equation}
so that
\[\E [c(T(l,w))] - (u(n+1) - u(n)) = \alpha + \beta_1 \cdot n + \beta_2 \cdot l + \beta_3 \cdot w.\]
This implies that the marginal benefit of having one more answer when $n$ answers have arrived is
\begin{equation}
\label{eq:utility}
u(n+1) - u(n) = \alpha_u - \beta_1 \cdot n
\end{equation}
and the expected cost of waiting for the next answer is
\begin{equation}
\label{eq:cost}
\E [c(T(l,w))] = \alpha_c + \beta_2 \cdot l + \beta_3 \cdot w
\end{equation}
such that
\[\alpha_c - \alpha_u = \alpha.\]
Equations \eqref{eq:utility} and \eqref{eq:cost} are used in Section \ref{sec:ut_est} to interpret the estimated parameters of \eqref{eq:logit}.

\subsection{Model Estimation}
\label{sec:discrete}

We now use logistic regression to estimate \eqref{eq:logit}, i.e., we estimate the probability that a user closes his question as a function of (i) the number of answers ($n$), (ii) the last inter-arrival time ($l$), and (iii) the time that the user has waited since the last answer arrived ($w$).  We find that the probability of closing the question increases with all three variables, supporting Hypotheses \ref{h:number}, \ref{h:interarrival}, and \ref{h:hours} respectively.

We estimate \eqref{eq:logit} assuming that users visit \YA to check for new answers to their question every hour after the last answer arrived.
The maximum likelihood estimators are given in Table \ref{tab:logitBall}.  All parameter estimates are statistically significant at the 0.001 level.  We also used a generalized additive model~\cite{learning} to fit the data, which suggested that the assumed linearity in \eqref{eq:logit} is a good fit for the data.

It is worth noting that our results do not heavily depend on our assumption that users check for new answers every hour.  In particular, we get similar estimates for $\beta_1$, $\beta_2$ and $\beta_3$, if we assume that users check for answers every 2 hours, every 5 hours, or every 30 minutes.  For instance, if we assume that users check for new answers every 2 hours, we get $(\beta_1, \beta_2, \beta_3) = (0.026, 0.027, 0.022)$ instead of $(\beta_1, \beta_2, \beta_3) = (0.027, 0.028, 0.021)$.

\begin{table}
\begin{center}
\begin{tabular}{|c|c|}
  \hline
   & Estimate\\
  \hline
  $\alpha$  &-4.408*** (0.0603)\\
  $\beta_1$ & 0.027*** (0.0005)\\
  $\beta_2$ & 0.028*** (0.0002)\\
  $\beta_3$ & 0.021*** (0.0001)\\
  \hline
  Observations & 54,914\\
  \hline
\end{tabular}
\end{center}
\caption{The effect of the number of answers, the last inter-arrival time, and the time since the last answer on the probability of closing the question with $p$ as the dependent variable.  *, ** and *** denote significance at 1\%, 0.5\% and 0.1\% respectively.   Standard errors are given in parenthesis.}
\label{tab:logitBall}
\end{table}

We can now draw qualitative conclusions by considering the signs of the estimated coefficients.
We use the fact that the sign of a coefficient gives the sign of the corresponding marginal effect (since the logistic function is increasing).

First, the probability of closing the question is greater when more answers have arrived, which supports Hypothesis \ref{h:number}.  This implies that the marginal benefit of having one additional answer decreases as the number of answers increases, and is consistent with Table \ref{tab:cor} of Section \ref{sec:elapsed}.

Second, the probability of closing the question is greater when the last inter-arrival time is greater, which supports Hypothesis \ref{h:interarrival}.  The inverse of the inter-arrival time gives the rate at which answers arrive.  Thus, when the last inter-arrival time is large, i.e., there is a large time gap between the last answer and the answer before it, the user may expect that he will have to wait a long time until he receives the next answer.  This perceived high cost of waiting may encourage the user to close the question sooner when the last inter-arrival time is large.

Third, the probability of closing the question is greater when more time has elapsed since the last answer, which supports Hypothesis \ref{h:hours}.  As more time elapses since the last answer, the uncertainty increases, since the user does not know when the next answer will arrive.
The increased uncertainty may lead him to expect a greater waiting cost until the next answer arrives.  In turn, this encourages the user to close the question.

\subsection{Utility and Cost}
\label{sec:ut_est}

The following lemma establishes a specific quadratic form for the utility function.
\begin{lemma}
\label{l:utility}
If \eqref{eq:utility} holds, then
\begin{equation}
\label{eq:ut}
u(n) = \left(\alpha_u + \frac{\beta_1}{2}\right) n - \frac{\beta_1}{2} n^2 + u(0).
\end{equation}
\end{lemma}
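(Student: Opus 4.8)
The plan is to recover $u(n)$ from the first-difference relation \eqref{eq:utility} by telescoping. Equation \eqref{eq:utility} states that $u(n+1) - u(n) = \alpha_u - \beta_1 n$ for every nonnegative integer $n$, so summing this identity over $n = 0, 1, \ldots, m-1$ collapses the left-hand side to $u(m) - u(0)$, while the right-hand side becomes $\alpha_u m - \beta_1 \sum_{n=0}^{m-1} n = \alpha_u m - \beta_1 \frac{m(m-1)}{2}$.

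The remaining work is purely algebraic: expand $-\beta_1 \frac{m(m-1)}{2} = -\frac{\beta_1}{2} m^2 + \frac{\beta_1}{2} m$, combine the linear terms to get $\left(\alpha_u + \frac{\beta_1}{2}\right) m - \frac{\beta_1}{2} m^2$, add $u(0)$, and rename $m$ to $n$. This reproduces \eqref{eq:ut} exactly.

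If one prefers a proof that avoids writing out the finite sum, an equivalent route is to verify directly that the claimed closed form satisfies \eqref{eq:utility}: compute $u(n+1) - u(n)$ for $u(n) = \left(\alpha_u + \frac{\beta_1}{2}\right) n - \frac{\beta_1}{2} n^2 + u(0)$, obtaining $\left(\alpha_u + \frac{\beta_1}{2}\right) - \frac{\beta_1}{2}\left((n+1)^2 - n^2\right) = \alpha_u + \frac{\beta_1}{2} - \frac{\beta_1}{2}(2n+1) = \alpha_u - \beta_1 n$, and then invoke the fact that a sequence is determined by its value at $0$ together with its first differences.

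There is no real obstacle here — the statement is essentially the discrete antiderivative of a linear function, and the only thing to be careful about is the bookkeeping of the constant term (ensuring the additive constant is $u(0)$ rather than some shifted version) and the off-by-one in $\sum_{n=0}^{m-1} n = \frac{m(m-1)}{2}$ rather than $\frac{m(m+1)}{2}$. I would present the telescoping argument as the main proof since it also explains \emph{why} the quadratic form arises rather than merely checking it.
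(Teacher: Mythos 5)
Your telescoping argument is exactly the proof the paper gives: summing the first differences $u(i+1)-u(i)=\alpha_u-\beta_1 i$ over $i=0,\dots,n-1$ and simplifying $\sum_{i=0}^{n-1} i = n(n-1)/2$. The algebra and the closed form match, so the proposal is correct and essentially identical to the paper's proof.
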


\begin{proof}
If \eqref{eq:utility} holds, then
\begin{align*}
u(n) &=u(0)+\sum_{i=0}^{n-1} (\alpha_u - \beta_1 i) \\
     &= \left(\alpha_u + \frac{\beta_1}{2}\right) n - \frac{\beta_1}{2} n^2 + u(0)
\end{align*}
\end{proof}

We observe that the utility function given in \eqref{eq:ut} is concave on $[0, \infty)$ for any value of $\alpha_u$ as long as $\beta_1 > 0$, which is the case for our dataset since $\beta_1 = 0.027$ (see Table \ref{tab:logitBall}).
Moreover, for any fixed $\beta_1 > 0$ and $\alpha_u > 0$, the utility is unimodal: it is initially increasing (for $n < \lfloor\alpha_u/\beta_1 +0.5\rfloor$) and then decreasing (for $n > \lceil\alpha_u/\beta_1 +0.5\rceil$).  The latter may occur due to information overload; after a very large number of answers, the benefit of having one more answer may be so small that the cost of reading it exceeds the benefit, thus creating a disutility to the user.
Nevertheless, since questions at \YA rarely get a very large number of answers, the utility function given by \eqref{eq:ut} may be increasing throughout the domain of interest if $\alpha_u/\beta_1$ is sufficiently large.

We note that from \eqref{eq:logit} we estimate $\beta_1$ and $\alpha$, but cannot estimate $\alpha_u$.
For the sake of illustration, we plot the estimated utility $u(n)$ for various values of $\alpha_u$ in Figure \ref{fig:utility}, assuming that $u(0) = 0$. A reasonable domain to consider is $[0,50]$, since questions rarely get more than 50 answers.
We observe that when $\alpha_u$ is small (e.g., $\alpha_u = 1$), then the estimated $u(n)$ is decreasing for large values of $n$ within the $[0,50]$ region; suggesting an information overload effect.  On the other hand, for $\alpha_u \in \{2,3,4\}$, the estimated utility function is increasing throughout $[0,50]$.  Moreover, as $\alpha_u$ increases, the curvature of the estimated utility decreases, something we can also conclude from \eqref{eq:ut}.

\begin{figure}
\begin{center}
		\includegraphics[width=0.5\textwidth]{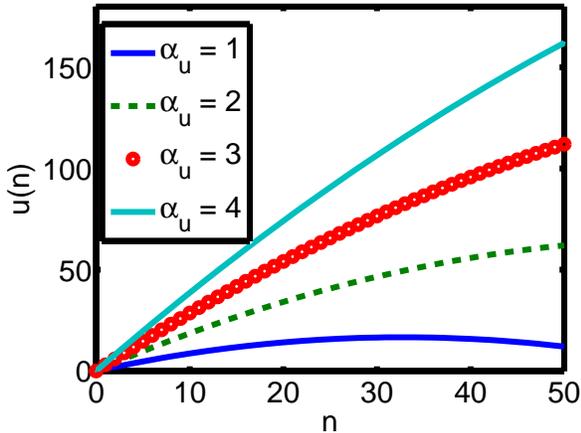}
\end{center}
\vspace{-1ex}
\caption
{Estimated $u(n)$ for $\alpha_u \in \{1,2,3,4\}$.}
\vspace{-2ex}
\label{fig:utility}
\end{figure}

We next consider the cost of waiting.  Equation \eqref{eq:cost} suggests that the expected cost increases linearly in both the last inter-arrival time $l$ and the time since the last answer arrival $w$.  However, it is not possible to get a specific form for the function $c(t)$ in the way we did for $u(n)$ in Lemma \ref{l:utility}, because we do not have any information on $T(l,w)$.  If we make assumptions on $T(l,w)$, we can draw conclusions about $c(t)$.  For instance, if we assume that the user is using a single estimate $\tau(l,w)$ on the time until the next answer arrives, then \eqref{eq:cost} implies that
\[c(\tau(l,w)) = \alpha_c + \beta_2 l + \beta_3 w.\]
Moreover, if the estimate $\tau(l,w)$ is linear in $l$ and $w$ we conclude that the cost of waiting is linear.  On the other hand, a concave cost would be consistent with a convex estimate, and a convex cost would be consistent with a concave estimate.

\section{Assessing Quality}
\label{sec:quality}

Our previous analysis considers how the decision problem of the user depends on the number of answers and time.  There is a third aspect that affects the user's decision to close his question: the quality of the answers that have arrived up to now.
In this section, we use an alternative model that incorporates quality, but does not incorporate time and the number of answers in the detail of Section \ref{sec:tradeoff}.  Our approach here is inspired by~\cite{surfing, surfing_science}.

Let $X_n$ be the value of the $n$-th answer.  This is in general subjective, and depends on the asker's interpretation and assessment. We assume that the value of an answer depends on both its quality and on the time that the user had to wait to get it.  For instance, $X_n$ may be negative if the waiting time was very large and the answer was not good (according to the user's judgement).
We model the values of the answers as a random walk, and assume that
\begin{equation}
\label{eq:RW}
X_{n+1} = X_n + Z_n,
\end{equation}
where the random variables $Z_n$ are independent and identically distributed.  For instance, if the user just got a high quality answer, he believes that the next answer will most likely also have high quality.  Similarly, if the user did not have to wait long for an answer, he expects that the next answer will probably arrive soon.  We note that \eqref{eq:RW} is consistent with the availability heuristic~\cite{availability}, which leads individuals to judge the frequency of an event by how easily they can bring an instance to mind.

Every time that a user sees an answer, he derives utility that is equal to the answer's value.  We assume that the user discounts the value he receives from future answers according to a discount factor $\delta$.
Let $V(x)$ be the maximum infinite 
horizon value when the value of the last answer is $x$.  Then,
\[V(x) = x  + \max\{0, \delta \cdot E(V(x+Z))\}.\]
In particular, the user decides to close the question if the value of closing exceeds the value of waiting for an additional answer.  If he closes the question, the user gets no future answers, and thus gets future value equal to 0.  On the other hand, if the user does not close the question, he gets value $E(V(x+Z))$ in the future, which he discounts by $\delta$.  Depending on which term is greater, the user decides whether to close the question or not.

We observe that $V(x)$ is increasing in $x$, which implies that $E(V(x+Z))$ is increasing in $x$.  We conclude that there exists a threshold $x^*$ such that it is optimal for the user to stop (i.e., close the question) when the value of the last answer is smaller than $x^*$ and to continue when the value of the last answer is greater than $x^*$.  The threshold $x^*$ satisfies $E(V(x^*+Z)) = 0$.

From an initial answer value, the user waits for additional answers, with values following a random walk as specified by \eqref{eq:RW}, until the value of an answer first hits the threshold value.  Thus, the number of answers until the user terminates the search is a random variable.  In the limit of true Brownian motion, the first passage times are distributed according to the inverse Gaussian distribution~\cite{surfing}. Then, the probability density of the number of answers to a question is given by
\begin{equation}
\label{eq:invG}
f(x) = \sqrt{\frac{\lambda}{2 \pi}} x^{-3/2} \exp\left(-\frac{\lambda}{2 \mu^2 x}(x-\mu)^2\right),
\end{equation}
where $\mu$ is the mean and $\lambda$ is a scale parameter.  We note that the variance is equal to $\mu^3/\lambda$.

We use the dataset to test the validity of \eqref{eq:invG}.  We find that the maximum likelihood inverse Gaussian has $\mu = 6.1$ and $\lambda = 5.8$.  Figure \ref{fig:cdf} shows the empirical and fitted cumulative distribution functions.  We observe that the inverse Gaussian distribution is a very good fit for the data.

An important property of the inverse Gaussian distribution is that for large variance, the probability density is well-approximated by a straight line with slope -3/2 for larger values of $x$ on a log-log plot; thus generating a Zipf-like distribution.  This can be easily seen by taking logarithms on both sides of \eqref{eq:invG}.  In Figure \ref{fig:loglog} we plot the frequency distribution of the number of answers on log-log scales.  We observe that the slope at the tail is approximately -3/2.

\begin{figure}
\begin{center}
		\includegraphics[width=0.5\textwidth]{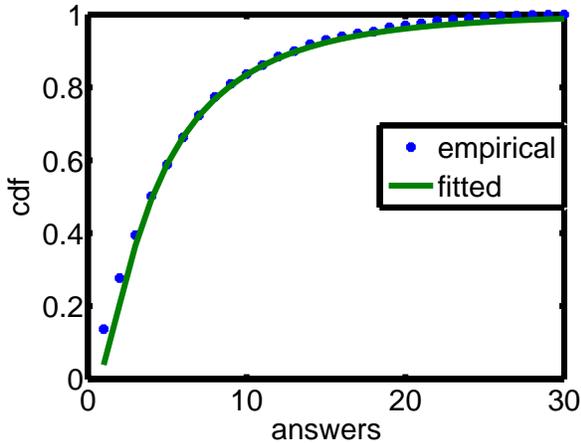}
\end{center}
\vspace{-1ex}
\caption
{Empirical and inverse Gaussian fitted cumulative distributions.
The points are the empirical cumulative distribution function of the number of answers.  The curve is the cumulative distribution function of the maximum likelihood inverse Gaussian.}
\vspace{-2ex}
\label{fig:cdf}
\end{figure}

\begin{figure}
\begin{center}
        \includegraphics[width=0.5\textwidth]{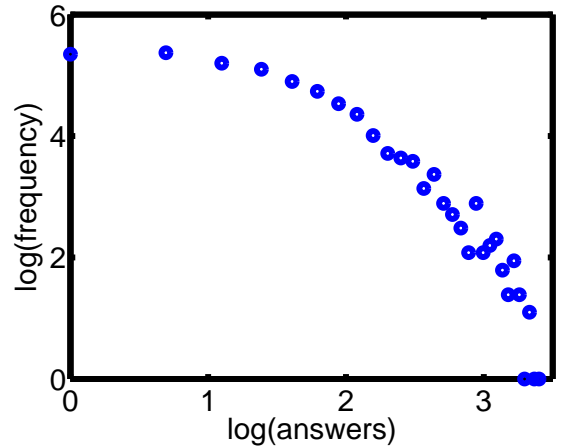}
\end{center}
\vspace{-1ex}
\caption
{The frequency distribution of the number of answers on log-log scales.}
\vspace{-2ex}
\label{fig:loglog}
\end{figure}

\section{Conclusion}
\label{sec:conclusion}

This paper empirically studies how people behave when they face speed-accuracy tradeoffs.
We have taken two complementary approaches.

Our first approach is to study the speed-accuracy tradeoff by using the number of answers as a proxy for accuracy.  In particular, we assume that the user approximates the accuracy of his aggregate answer by the number of answers that his question gets.  Thus, the user faces the following tradeoff: he prefers more to less answers, but does not want to wait too long.  We analyze \YA data to identify and quantify this tradeoff.  We find that users are willing to wait longer to obtain one additional answer when they have only received a small number of answers; this implies  decreasing marginal returns in the number of answers, or equivalently, a concave utility function.  We then estimate the utility function from the data.

Our second approach focuses on how users assess the qualities of the individual answers without explicitly considering the cost of waiting.  We assume that users make a sequence of decisions to wait for another answer, deciding to wait as long as the current answer exceeds some threshold in value.  Under this model, the probability distribution for the number of answers that a question gets is an inverse Gaussian, which is a Zipf-like distribution.  We use the data to validate this conclusion.

It remains an open question how to combine these two approaches in order to study the speed-accuracy tradeoff by jointly considering the number of answers, their qualities, and their arrival times.

We conclude by noting that our results could be used by \YA or other question-answering sites to prioritize the way questions are shown to potential answerers in order to maximize social surplus.
The key observation is that a question receives answers at a higher rate when it is shown on the first page at \YAc.  On the other hand, the rate at which answers are received also depends on the quality of the question.
Using appropriate information about these rates as well as the utility function estimated in this paper, the site can position open questions with the objective of maximizing the sum of users' utilities.

\section{Acknowledgements}

We gratefully acknowledge Eugene Agichtein for providing the dataset~\cite{datasetB}, as well as detailed information on how the data was collected.

\bibliographystyle{abbrv}
\bibliography{ref}

\end{document}